\renewcommand{\Im}{\operatorname{Im}}
\newcommand{\clr}{\color{red!75!black}}
\newcommand{\tx}[1]{\text{#1}}
\newcommand{\bs}[1]{\boldsymbol{#1}}
\newtheorem{theorem}{Theorem}
\newtheorem{lemma}{Lemma}
\newtheorem{corollary}{Corollary}
\begin{document}

\title{Universal spectral moment theorem and its applications in non-Hermitian systems}

\author{Nan Cheng}
\author{Chang Shu}
\author{Kai Zhang}
\author{Xiaoming Mao}
\author{Kai Sun}
\affiliation{Department of Physics, University of Michigan, Ann Arbor, MI 48109, USA}

\begin{abstract}
The high sensitivity of the spectrum and wavefunctions to boundary conditions, termed the non-Hermitian skin effect, represents a fundamental aspect of non-Hermitian systems. While it endows non-Hermitian systems with unprecedented physical properties, it presents notable obstacles in grasping universal properties that are robust against microscopic details and boundary conditions. 
In this Letter, we introduce a pivotal theorem: in the thermodynamic limit, for any non-Hermitian systems with finite-range interactions, all spectral moments are invariant quantities, independent of boundary conditions, posing strong constraints on the spectrum. 
Utilizing this invariance, we propose a new criterion for bulk dynamical phases based on experimentally observable features and applicable to any dimensions and any boundary conditions.
Based on this criterion, we define the bulk dispersive-to-proliferative phase transition, which is distinct from the real-to-complex spectral transition and contrary to traditional expectations.
We verify these findings in 1D and 2D lattice models.
\end{abstract}

\maketitle

\noindent{\it \clr Introduction.}---When a system interacts with external environments, the use of a non-Hermitian Hamiltonian becomes an efficient description and leads to a realm of new discoveries~\cite{Rotter2009,Moiseyev2011,Ashida2020}. 
The non-Hermitian elements manifest differently in various physical setups, for example, imbalanced mode damping in optical and acoustic systems~\cite{FengLiang2017,Longhi2017Review,Ganainy2018,Huang2023NRP}, odd viscosity and elasticity in mechanical systems~\cite{Vincenzo2020NP,Vincenzo2020PRL,ZhouDi2020PRR,Banerjee2021PRL,ShankarNature}, quasi-particle excitations with finite lifetime in condensed matter~\cite{ShenHTPRL2018,FLPRL2020}, time evolution of observables in open-quantum systems~\cite{SongFei2019,ChuCH2020PRR,Ueda2021PRL}, and dynamics of species population in biological systems~\cite{Nelson1998PRE,Nelson2016PRE}. 
The non-Hermitian Hamiltonian enables complex eigenvalues, giving rise to a myriad of intriguing phenomena not found in conservative systems~\cite{McDonaldPRX2018,Ashvin2019PRL,LQPRL2022,Kai2023PRL,Diego2019PRL,YYFPRL2020,XueWT2021PRB,LiLH2021NC,Longhi2022_PRL}. 

One central topic in non-Hermitian band systems~\cite{FuLiang2018_PRL, GongPRX2018, SatoPRX2019, Bergholtz2021_RMP} is the non-Hermitian skin effect (NHSE)~\cite{Yao2018, Murakami2019_PRL, ChingHua2019,Slager2020PRL,Kai2020, Okuma2020_PRL, ZSaGBZPRL, LeeCHReview}, where a large number of bulk wavefunctions localize at open boundaries. 
A key feature of NHSE is its high spectral sensitivity to boundary conditions~\cite{LeeModel2016, Torres2018, Xiong2018}. 
It is generally observed that the spectrum is dramatically reshaped as the boundary conditions change from periodic to open.
In two and higher dimensions, the spectrum exhibits even more complex characteristics~\cite{WangZhong2018,LeeCH2019_PRL,Kai2020,Okuma2020_PRL,Nori2019,Song_2022,HuiJiang2022,Murakami2022,wang2022amoeba,hu2023nonhermitian}. 
The spectral density distribution also depends on different open boundary conditions (OBC) geometries~\cite{Kai2022NC,Wang2022NC,QYZhouN2023C,KunD2023PRL,WanSciB,EdgeTheory2023}. 

Despite the exotic physical properties conferred by spectral sensitivity in non-Hermitian systems, the full understanding of this phenomenon remains elusive. This sensitivity to boundary conditions cannot be an arbitrary rearrangement of energy spectra; they must adhere to fundamental principles that are impervious to boundary conditions. The rationale for expecting such universality rests on the premise that, in systems with local (finite-range) interactions, altering boundary conditions only modifies a sub-extensive part of the system, whose volume compared to the bulk approaches zero in the thermodynamic limit. Consequently, there must exist pivotal characteristics dictated solely by the bulk, immune to any variations in boundary conditions. Some nascent insights into such bulk-dictated properties have surfaced recently; for instance, in certain systems of NHSE, although the (right) eigenstates display high boundary sensitivity, their local density of states are uniform in the bulk and insensitive to boundary conditions~\cite{YYFPRL2020,Okuma2021PRL}. Additionally, short-term wavepacket evolution in the bulk appears boundary-agnostic~\cite{MaoLiang2021}. Notwithstanding these exciting findings, underlying invariants and universal principles are yet to be unraveled.

In this Letter, we introduce and prove a universal spectral moment theorem, applicable to any systems with finite-range couplings---Hermitian or non-Hermitian. We demonstrate that in the thermodynamic limit, despite potentially dramatic shifts in their energy spectrum, all moments of the spectrum are determined entirely by the bulk and are invariant with respect to boundary conditions. For Hermitian systems, this theorem validates the longstanding thermodynamic principle that the density of states, when normalized to the bulk volume, is insensitive to boundary conditions. In the context of non-Hermitian systems, although energy spectra and densities of states may change significantly upon altering boundary conditions, this sensitivity is ultimately constrained by the bulk.

To demonstrate one application of the theorem, we study the time evolution of wave packets in $\mathcal{PT}$-symmetric non-Hermitian systems. Previous research has thoroughly investigated the phase transition from the $\mathcal{PT}$-exact phase, featuring entirely real eigenvalues, to the $\mathcal{PT}$-broken phase, in which some eigenvalues turn complex. However, for systems in the thermodynamic limit, our understanding on the precise effects of non-Hermiticity on $\mathcal{PT}$-symmetric systems remains limited, due to the extreme sensitivity of the eigenspectrum to the boundary conditions. In this context, we introduce the concept of the dispersive-to-proliferative transition, which encapsulates the competition between dispersiveness and non-unitary time evolution. By applying the theorem above, we show that this transition sets a universal upper bound on the real-to-complex transition of the eigenvalue spectrum, completely unaffected by the choice of boundary conditions or any sub-extensive perturbations. 

\noindent{\it {\clr The universal spectral moment theorem.---}}
Here we present the universal spectral moment theorem using lattice Hamiltonians, while these conclusions can also be generalized to continuous models by appropriately taking the continuum limit. 
We first define some notational conventions used in this paper. 
Let $\Omega$ be a bounded connected open region in $\mathbb{R}^d$, $\Gamma$ be a fixed infinite lattice in $\mathbb{R}^d$, $V$ be the volume of the Brillouin zone (BZ), $r\Gamma$ be the lattice obtained by scaling the lattice $\Gamma$ by a factor of $r$, $H$ be a real space periodic non-Hermitian lattice Hamiltonian with finite interaction range defined on the infinite lattice $\Gamma$ with Bloch Hamiltonian $H(\bs{k})$, $\bs{k}\in \mathrm{BZ}$. 
Without loss of generality, we assume that each node in the unit cell has only one degree of freedom and the number of nodes in a unit cell is $m$.  
Let $H_{r}$ be a lattice Hamiltonian with $N_{r}$ degrees of freedom defined on a finite lattice $\Omega\cap r\Gamma$ with the same interaction parameter (same nearest neighbor hopping, etc.) as $H$. 
As we decrease $r$ toward $0$, $H_{r}$ remains defined in the same open region $\Omega$, yet the lattice mesh becomes increasingly dense [Figs.~\ref{fig:1}(a) and (b)], with the continuum limit corresponding to $r \to 0$. For a lattice model, the limit $r \to 0$ is essentially equivalent to maintaining a constant lattice spacing while scaling the size of the open region to infinity, i.e., the thermodynamic limit. 

Let $\rho_{\Omega}(E)$ be the normalized spectral density of the open-boundary Hamiltonian in the continuum limit, defined as $\rho_{\Omega}(E)=\lim_{l\rightarrow 0}\lim_{r\rightarrow 0}N(E,l,r)/l^{2}N_{r}$, where $N(E,l,r)$ counts the states within a square of area $l^2$ centered at $E$ in the complex energy plane. The integral of $\rho_{\Omega}(E)$ over the entire energy plane is $1$. 
Although the spectral density $\rho_{\Omega}(E)$ itself may depend on the boundary geometry, the spectral moments are invariant, as stated in the following universal spectral moment theorem. 
\begin{theorem}\label{thm:spectralmoment}
    For any positive integer $n$, the $n^{th}$ moment of the normalized density of states $\rho_{\Omega}(z)$ in the continuum limit is independent of the boundary condition and is related to the Bloch Hamiltonian $H(\bs{k})$ by the following formula
    \begin{equation}\label{eq:spectralmoment}
        \int_{E\in\mathbb{C}}E^{n}\rho_{\Omega}(E)\,\dd S=\frac{1}{mV}\int_{\bs{k}\in \text{BZ}}\Tr(H(\bs{k})^{n})\,\dd\bs{k},
    \end{equation}
    where $\dd S$ is the area element in the complex-energy plane.
\end{theorem}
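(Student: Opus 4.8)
The plan is to express the $n$th spectral moment as a normalized trace of $H^n$, compute that trace by counting closed walks of length $n$ on the lattice, and argue that boundary modifications contribute a sub-extensive number of such walks. Concretely, write $\int E^n \rho_\Omega(E)\,\dd S = \lim_{r\to 0}\frac{1}{N_r}\Tr(H_r^n)$. The key observation — which holds for \emph{any} matrix, Hermitian or not — is that $\sum_j E_j^n = \Tr(H_r^n)$ regardless of diagonalizability, because both sides are coefficients of the characteristic polynomial (or one can invoke Schur triangularization and note the diagonal of an upper-triangular power is the power of the diagonal). This is the crucial point that makes the statement survive in the non-Hermitian setting, where $\rho_\Omega$ is genuinely a distribution in the 2D complex plane and individual eigenvalues are wildly boundary-sensitive.

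Next I would expand $\Tr(H_r^n) = \sum_{i_0,i_1,\dots,i_{n-1}} (H_r)_{i_0 i_1}(H_r)_{i_1 i_2}\cdots (H_r)_{i_{n-1} i_0}$, a sum over closed walks of length $n$ on the finite lattice $\Omega\cap r\Gamma$. Because $H$ has finite interaction range $R$, every nonzero term corresponds to a walk all of whose vertices lie within distance $nR$ of the starting vertex $i_0$. Partition the starting vertices into \emph{bulk} vertices (those whose entire $nR$-neighborhood sits inside $\Omega\cap r\Gamma$ and does not feel the boundary condition) and \emph{boundary} vertices (the rest). For a bulk starting vertex the local weights coincide exactly with those of the infinite periodic Hamiltonian $H$, so the contribution per unit cell is precisely $\frac{1}{m}\Tr\big(\text{[per-cell closed-walk sum for }H\big]) = \frac{1}{mV}\int_{\mathrm{BZ}}\Tr(H(\bs k)^n)\,\dd\bs k$, using the standard identity that the momentum-space integral of $\Tr(H(\bs k)^n)$ equals the real-space per-unit-cell return amplitude. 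The boundary vertices number $O\big((\text{surface area})\cdot r^{-(d-1)}\big)$ while $N_r = O(r^{-d})$, and each closed-walk weight is bounded by a constant depending only on $n$ and the hopping amplitudes; hence the boundary contribution to $\frac{1}{N_r}\Tr(H_r^n)$ is $O(r)\to 0$.

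The **main obstacle**, and where I would spend the most care, is controlling the interchange of limits and the convergence of the Riemann-type sum: one must show that $\lim_{l\to 0}\lim_{r\to 0} \sum_j N(E_j \text{ in box}) E^n / (l^2 N_r)$ genuinely reproduces $\lim_{r\to 0}\frac{1}{N_r}\sum_j E_j^n$, i.e. that the 2D-complex-plane density $\rho_\Omega$ integrates against $E^n$ to the limiting normalized trace. This requires some uniform control on the eigenvalue distribution — e.g. that eigenvalues do not escape to infinity (guaranteed since $\|H_r\|$ is bounded uniformly in $r$ by finite range plus bounded hoppings, so all $E_j$ lie in a fixed disk) and that the empirical spectral measures converge weakly so that moments pass to the limit. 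A secondary subtlety is that in the non-Hermitian case one should be slightly careful that "moment of $\rho_\Omega$" is well-defined as a complex number (the $n$th moment $\int E^n \rho_\Omega(E)\,\dd S$, not $\int |E|^n$), matching $\Tr(H_r^n)$ which is likewise complex; the argument above handles this automatically since the closed-walk expansion is algebraic and never uses positivity.

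Finally, I would remark that the Hermitian case is the special instance where $\rho_\Omega$ is supported on the real line and the theorem reduces to the classical statement that the integrated density of states per unit volume is boundary-independent; and that the same closed-walk / sub-extensive-boundary argument shows more generally that \emph{any} quantity of the form $\frac{1}{N_r}\Tr f(H_r)$ with $f$ a polynomial (hence, by approximation within the fixed bounding disk, any entire $f$) is boundary-independent in the limit.
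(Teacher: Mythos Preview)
Your proposal is correct and follows essentially the same route as the paper: write the moment as $\lim_{r\to 0}N_r^{-1}\Tr H_r^n$, expand the trace as a sum over closed walks of length $n$, split starting vertices into bulk and a sub-extensive boundary layer, and identify the bulk per-cell contribution with the periodic value $\frac{1}{mV}\int_{\mathrm{BZ}}\Tr H(\bs k)^n\,\dd\bs k$. If anything you are more careful than the paper about two points it takes for granted---that $\sum_j E_j^n=\Tr H_r^n$ holds without diagonalizability, and that the double limit defining $\rho_\Omega$ actually reproduces the normalized trace---and your $O(r)$ estimate for the boundary contribution is exactly the convergence rate the paper proves separately in its Appendix~A.
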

Theorem~\ref{thm:spectralmoment} states that the arbitrary order-$n$ spectral moment of the OBC normalized density of states is an intrinsic property independent of boundary conditions of the non-Hermitian Hamiltonian. 
It's worth noting that the spectral moments, although defined by the OBC Hamiltonian, can be determined by only solving the Bloch Hamiltonian $H(\bs{k})$, thereby making them easily computable. 
For lattice Hamiltonians with finite-range couplings, the complex-valued OBC spectrum covers a bounded region, and $\rho_{\Omega}(E)$ is zero for all sufficiently large $|E|$. 
One straightforward application is that when the OBC spectrum is real, $\rho_{\Omega}(E)$ can be completely determined by its spectral moments (related to Hausdorff moment problem~\cite{shohat1943problem}). 
When the spectrum becomes complex, these spectral moments in general do not uniquely determine $\rho_{\Omega}(E)$. We need the information of all mixed moments $\int_{E\in\mathbb{C}}E^{a}(E^{*})^{b}\rho_{\Omega}(E)\,\dd S$ to fully determine $\rho_{\Omega}(E)$.  
Now we provide a proof of Theorem~\ref{thm:spectralmoment}. 
\begin{figure}[t]
    \centering \includegraphics[width=1\linewidth]{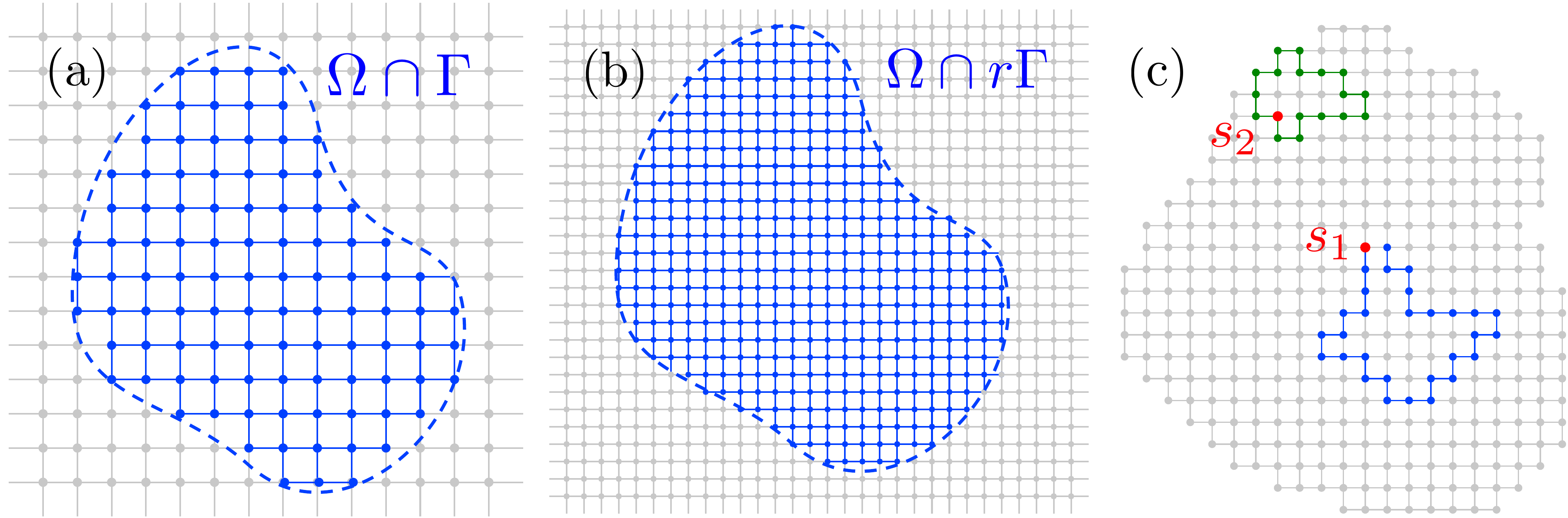}
    \caption{Lattice scaling and loops in $\Omega\cap r\Gamma$. 
    (a) Region $\Omega$ embedded in the background lattice $\Gamma$.
    (b) Same region $\Omega$ with scaled background lattice $r\Gamma$ ($r<1$).
    (c) Loops starting from bulk point $s_1$ and from point $s_2$ near the boundary.}
    \label{fig:1}
\end{figure}
\begin{proof} 
    By the definition of the density of states $\rho_{\Omega}(E)$, we have
    \begin{equation}
        \lim_{r\rightarrow 0}\frac{\Tr H_{r}^{n}}{N_{r}}=\int_{E\in\mathbb{C}}E^{n}\rho_{\Omega}(E)dS.
    \end{equation}     
    Given a node $s$ in a finite size lattice ($r\Gamma\cap\Omega$), we have
    \begin{equation}\label{eq:loopcounting}
        (H_{r}^{n})_{ss}=\sum_{i_{1},...,i_{n-1}}H_{si_{n-1}}...H_{i_{2}i_{1}}H_{i_{1}s}=\sum_{L_{s}}\omega(L_{s}),
    \end{equation}
    where $L_{s}$ indicates a loop starting and ending at node $s$ [Fig.~\ref{fig:1}], and the last summation is over all loops with the weight $\omega(L_{s})$ being the product of the hopping strength on the loop $L_{s}$. 
    From Eq.~\eqref{eq:loopcounting} we see that if a node $s$ is deep in the bulk [Fig.~\ref{fig:1}(c)], all loops $L_{s}$ cannot touch the lattice boundary, and thus $(H_{r}^{n})_{ss}$ largely depends on the bulk. 
    Let $H_{R}$ be the real space Hamiltonian with the same hopping parameters on a large torus of corresponding dimension such that the number of nodes $R$ along one direction is much larger than $n$. 
    Let $n_{R}$ be the number of unit cells contained in the graph defining $H_{R}$.
    Using the fact that for any given $n$, the portion of nodes no farther than $n/2$ hopping steps away from the boundary in the set $r\Gamma\cap\Omega$ tends to zero [see Fig.~\ref{fig:1}(c) and Appendix A] and $(H_{r}^{n})_{ss}$ doesn't depend on the boundary when node $s$ is in the bulk, we have
    \begin{equation}\label{eq:obctotorus}
        \lim_{r\rightarrow 0}\frac{\Tr H_{r}^{n}}{N_{r}}=\frac{\Tr H_{R}^{n}}{mn_{R}}.
    \end{equation}
    Since the left-hand side doesn't depend on $R$, we can apply Bloch's theorem to block diagonalize $H_{R}$ and take $R\rightarrow\infty$ limit on both sides of Eq.~\eqref{eq:obctotorus}.
    It follows that
    \begin{equation}\label{eq:limittraceidentity}
        \lim_{r\rightarrow 0}\frac{\Tr H_{r}^{n}}{N_{r}}=\lim_{R\rightarrow\infty}\frac{\Tr H_{R}^{n}}{mn_{R}}=\frac{1}{mV}\int_{\bs{k}\in BZ}\Tr(H(\bs{k})^{n})\,\dd\bs{k},
    \end{equation}
    which completes the proof of Theorem~\ref{thm:spectralmoment}. 
\end{proof}
From this proof, we can also see that the spectral moments of the non-Hermitian Hamiltonian under periodic boundary conditions (PBC) are the same as those under OBC in the continuum limit. 
Consequently, the spectral moments are indeed independent of boundary conditions. 
It is worth noting that if $F(z)=\sum_{n=0}^{\infty}a_{n}z^{n}$ is an analytical function with no poles in the PBC and OBC spectrum, we have [see Appendix B for a proof of the case where $F(z)=e^{-i zt}$] 
\begin{equation}\label{eq:szegothm}
    \lim_{r\rightarrow 0}\frac{\Tr\,F(H_{r})}{N_{r}}=\frac{1}{mV}\int_{\bs{k}\in BZ}\Tr\,F[H(\bs{k})]\,\dd\bs{k}.
\end{equation}
This result is fully consistent with Szegö's limit theorem~\cite{WIDOM1980182}, and this conclusion can be generalized to encompass arbitrary boundary conditions as well as any form of sub-extensive perturbations or disorders that may occur in the system.

To demonstrate Theorem~\ref{thm:spectralmoment}, we calculate the spectrum of a non-Hermitian lattice model in two dimensions and show that the normalized density of states depends on boundary conditions while the spectral moments don't. 
Consider a non-Hermitian tight binding model as illustrated in Fig.~\ref{fig:2}(a). 
We calculate the periodic-boundary spectrum, i.e., $H(\bs{k})$ for all $\bs{k}$ in the Brillouin zone, and the open-boundary eigenvalues with system size of $L_x=L_y=100$ [Fig.~\ref{fig:2}(b)]. 
The spectral density is drastically different under OBC and PBC [Fig.~\ref{fig:2}(c)]. 
However, their spectral moments coincide when the system size tends to continuum limit ($r\to 0$), as shown in Fig.~\ref{fig:2}(d). 
The spectral moments with distinct boundary conditions converge at the rate of $r$, and this holds true regardless of system dimensions [see Appendix A]. 

\begin{figure}[t]
    \centering
    \includegraphics[width=1\linewidth]{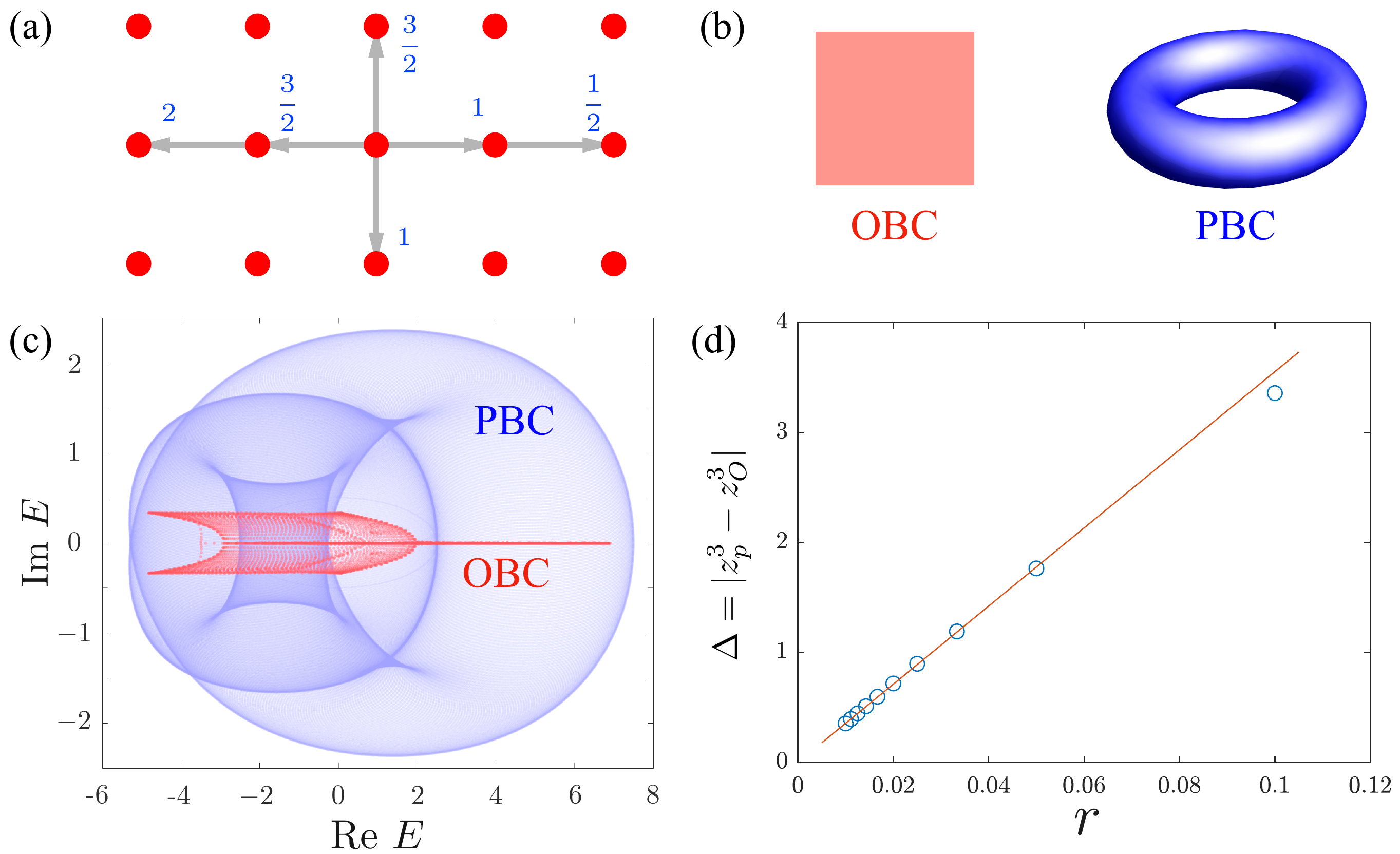}\caption{Invariance of spectral moments in 2D. 
    (a) The 2D non-Hermitian lattice model; 
    (b) OBC and PBC geometry;  (c) The OBC spectrum (red) and PBC spectrum (Blue) are drastically different; (d) Convergence of $3^{rd}$ spectral moment as a function of $r$. $z_{P}^3$ is the $3^{rd}$ spectral moment of an infinite system (right-hand side of Eq.~\eqref{eq:spectralmoment}), $z_{O}^3$ is the $3^{rd}$ spectral moment of large systems under OBC.} 
    \label{fig:2}
\end{figure}

\noindent{\it {\clr Dispersive-to-proliferative transition in $\mathcal{PT}$-symmetric non-Hermitian systems.---}}
The theorem discussed above not only imposes fundamental constraints on the energy spectrum, but also has practical relevance for experimental studies. To illustrate this, we investigate a $d$-dimensional $\mathcal{PT}$-symmetric system under arbitrary boundary conditions and examine its bulk wave dynamics, which is a standard experimental technique for probing non-Hermitian systems~\cite{XuePeng2021PRL}.

In $\mathcal{PT}$-symmetric systems, as the strength of non-Hermiticity $\gamma$ increases beyond a certain threshold $\gamma^*$, a phase transition will often occur, where the energy spectrum of the systems transitions from real to complex, known as the $\mathcal{PT}$-exact to $\mathcal{PT}$-broken transition. 
Depending on the system, the value of $\gamma^*$ may be zero or finite. More importantly, this threshold $\gamma^*$ can also vary with boundary conditions, as the energy eigenvalues are highly sensitive to them. However, in realistic experimental systems with large sizes, modifying boundary conditions only affects a negligible fraction of the system. Thus, except in the vicinity of the boundaries, the time dynamics should remain uninfluenced by boundary configurations. 

With this understanding we define this transition as follows. Prepare a $\delta$-function wave-packet at one bulk point (e.g., the origin), and measure its amplitude at the same point at a later time $t$. This measurement probes the amplitude of the equal-position correlation function, which takes the following form in the thermodynamic limit and can be viewed as the time-evolution operator averaged over all eigenvalues, 
\begin{equation}\label{eq:avgpropeigval}
    \bar{G}(t) \equiv \lim_{r\rightarrow 0}\bar{G}_r(t) = \lim_{r\rightarrow 0} \frac{1}{N_r} \Tr\;e^{-iH_{r}t}.
\end{equation}
Due to the $\mathcal{PT}$ symmetry, $|\bar{G}(t)|=|\bar{G}(-t)|$. Here we focus on the fate of $|\bar{G}(t)|$ at large time, i.e., whether it decays to zero or diverges towards infinity. 

In the thermodynamic limit, the time-evolution of $|\bar{G}(t)|$ is mainly governed by two competing effects: dispersiveness and non-unitary time evolution. At the Hermitian limit, $|\bar{G}(t)|$ decays as a power-law function of $t$ due to dipersiveness. In contrast, at strong non-Hermiticity, non-unitary time evolution makes $|\bar{G}(t)|$ to grow exponentially at large $t$. Thus, as we gradually increase the strength of non-Hermiticity ($\gamma$), a phase transition between the dispersive phase, $|\bar{G}(t)|\to 0$, and the proliferative phase, $|\bar{G}(t)|\to \infty$, shall arise, and the critical strengths of non-Hermiticity will be labeled $\gamma_c$. As will be shown below, in contrast to the real-to-complex spectrum transition, where $\gamma^*$ is often zero or sensitive to boundary conditions, the critical value $\gamma_c$ for this dispersive-to-proliferative transition is typically finite, and it is completely independent of boundary conditions at the thermodynamic limit.

Notably, while the two phase transitions (spectrum versus dispersiveness) are distinct, we will establish a deep connection between them by proving that $\gamma_c$ sets a universal upper bound for $\gamma^*$, as illustrated in Fig.~\ref{fig:3}(a). That means, although the value of $\gamma^*$ may fluctuate with different boundary conditions, it will never surpass $\gamma_c$. It's important to recognize that our analysis here primarily focuses on the regime of small $\gamma$, omitting the re-entry effect --- a phenomenon where the $\mathcal{PT}$-exact phase (or dispersive phase) can reappear at high values of $\gamma$. In more general scenarios, our findings indicate that the proliferative phase is necessarily contained within the complex spectrum phase, regardless of boundary conditions.

\noindent{\it {\clr Relations and differences between $\gamma^*$ and $\gamma_c$.---}}
From Eq.~\eqref{eq:szegothm} and Eq.~\eqref{eq:avgpropeigval}, we obtain 
\begin{equation}\label{eq:keyequality}
    \bar{G}(t) =\frac{1}{mV}\sum_{j=1}^{m}\int_{\bs{k} \in \mathrm{BZ}}e^{-i\lambda_{j}(\bs{k})t}\,\dd\bs{k},
\end{equation}
where $\lambda_{j}(\bs{k})$ denotes the $j${-}th band of the Bloch Hamiltonian $H(\bs{k})$.
To demonstrate the connection between $\bar{G}(t)$ and complex spectrum, we prove the following conclusion [A rigorous proof of the following Corollary is in Appendix C.]
\begin{corollary}\label{cor:complexityofspectrum}
    If we have
    \begin{equation}\label{eq:definitionofU}
        \abs{\bar{G}(t)} = \frac{1}{mV}\left|\sum_{j=1}^{m}\int_{\bs{k}\in \tx{BZ}}e^{-i\lambda_{j}(\bs{k})t}\,\dd\bs{k}\right|>1,
    \end{equation}
    for some $t\in\mathbb{R}$,
    $H_{r}$ has complex eigenvalues in the continuum limit.
    That is, there exist $r_{0}>0$ and $\epsilon>0$ such that $H_{r}$ has at least one complex eigenvalue whose absolute value of imaginary part is greater than $\epsilon$ for all $0<r<r_{0}$.
\end{corollary}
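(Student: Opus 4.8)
The plan is to argue by an explicit quantitative estimate rather than a bare ``not all eigenvalues are real'' remark, so that the uniform spectral gap $\epsilon$ comes out constructively. The starting point is that the trace is basis independent: writing $\lambda_1,\dots,\lambda_{N_r}$ for the eigenvalues of $H_r$ counted with algebraic multiplicity (by the Jordan decomposition, $e^{-iH_{r}t}$ has eigenvalues $e^{-i\lambda_k t}$ with the same multiplicities even when $H_r$ is not diagonalizable), one has $\bar{G}_r(t)=\frac{1}{N_r}\sum_{k=1}^{N_r}e^{-i\lambda_k t}$. Since $\abs{e^{-i\lambda_k t}}=e^{t\,\Im\lambda_k}$ for real $t$, setting $\epsilon_r:=\max_k\abs{\Im\lambda_k}$ gives the elementary bound $\abs{\bar{G}_r(t)}\le e^{\abs{t}\,\epsilon_r}$, equivalently $\epsilon_r\ge\abs{t}^{-1}\log\abs{\bar{G}_r(t)}$ whenever $t\neq 0$. (Note $\bar{G}_r(0)=1$, so $\abs{\bar{G}(0)}=1$ and the hypothesis $\abs{\bar{G}(t)}>1$ forces $t\neq 0$; in particular dividing by $\abs{t}$ is legitimate.)

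Next I would invoke Eq.~\eqref{eq:szegothm} with $F(z)=e^{-izt}$ (the case proved in Appendix~B), which guarantees $\bar{G}_r(t)\to\bar{G}(t)$ as $r\to 0$. Combined with $\abs{\bar{G}(t)}>1$ and continuity of the modulus, this produces an $r_0>0$ such that $\abs{\bar{G}_r(t)}\ge\sqrt{\abs{\bar{G}(t)}}>1$ for all $0<r<r_0$. Feeding this into the estimate from the previous step yields $\epsilon_r\ge\frac{1}{2\abs{t}}\log\abs{\bar{G}(t)}=:\epsilon>0$ for all such $r$. Since $\epsilon_r$ is by construction the largest imaginary part in modulus over the spectrum of $H_r$, this means $H_r$ has at least one eigenvalue with $\abs{\Im\lambda}\ge\epsilon$ for every $0<r<r_0$, which is the statement of Corollary~\ref{cor:complexityofspectrum} (shrink $\epsilon$ by an infinitesimal amount if one insists on a strict inequality).

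The conceptual content is entirely in the first two steps: if every eigenvalue of the (possibly non-Hermitian) $H_r$ lay on, or merely approached, the real axis as $r\to 0$, then $e^{-iH_{r}t}$ would be norm-bounded on average and $\abs{\bar{G}_r(t)}$ could not exceed $1$ in the limit, so a value strictly above $1$ certifies genuine, non-infinitesimal non-Hermiticity of the spectrum. The point that actually needs care --- and the reason a one-line argument does not suffice --- is that the Corollary demands a gap $\epsilon$ \emph{uniform} in $r$ as $r\to 0$, rather than just the existence of some complex eigenvalue for each fixed $r$; this is precisely what the logarithmic bound $\epsilon_r\ge\abs{t}^{-1}\log\abs{\bar{G}_r(t)}$ delivers once it is combined with the convergence $\bar{G}_r(t)\to\bar{G}(t)$ supplied by Eq.~\eqref{eq:szegothm}. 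A secondary technical item --- the validity of the eigenvalue trace formula and of $\abs{e^{-i\lambda_k t}}=e^{t\,\Im\lambda_k}$ when $H_r$ fails to be diagonalizable --- is handled by passing to Jordan normal form, so I do not expect it to present a real obstacle.
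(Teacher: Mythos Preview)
Your proof is correct and follows essentially the same approach as the paper's: both use the convergence $\bar{G}_r(t)\to\bar{G}(t)$ from Eq.~\eqref{eq:szegothm} together with the triangle-inequality bound $\abs{\bar{G}_r(t)}\le e^{\abs{t}\max_k\abs{\Im\lambda_k}}$ on the eigenvalue-averaged propagator. The paper phrases the second step as a contrapositive and treats $t_0>0$ and $t_0<0$ separately, whereas you run the argument directly, give an explicit constructive value $\epsilon=\frac{1}{2\abs{t}}\log\abs{\bar{G}(t)}$, and add the Jordan-form remark to cover non-diagonalizable $H_r$; these are refinements rather than a different route.
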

This Corollary implies that real OBC spectrum is a sufficient, rather than necessary, condition for dispersiveness, and thus a generic relation between real OBC spectrum and dispersiveness is illustrated in Fig.~\ref{fig:3}(a), where $\gamma$ denotes a general non-Hermitian parameter. 
We emphasize that the dispersive-to-prolifrative transition $\gamma_{c}$ can be detected through short-time bulk dynamics in experiments, and is boundary-agnostic in the thermodynamic limit, making it an intrinsic phase transition. In contrast, the OBC spectrum transition point $\gamma^{*}$ is sensitive to the boundary, as it results from the interplay between the dynamics and the system's boundary. 

\begin{figure}[t]
    \centering
    \includegraphics[width=1\linewidth]{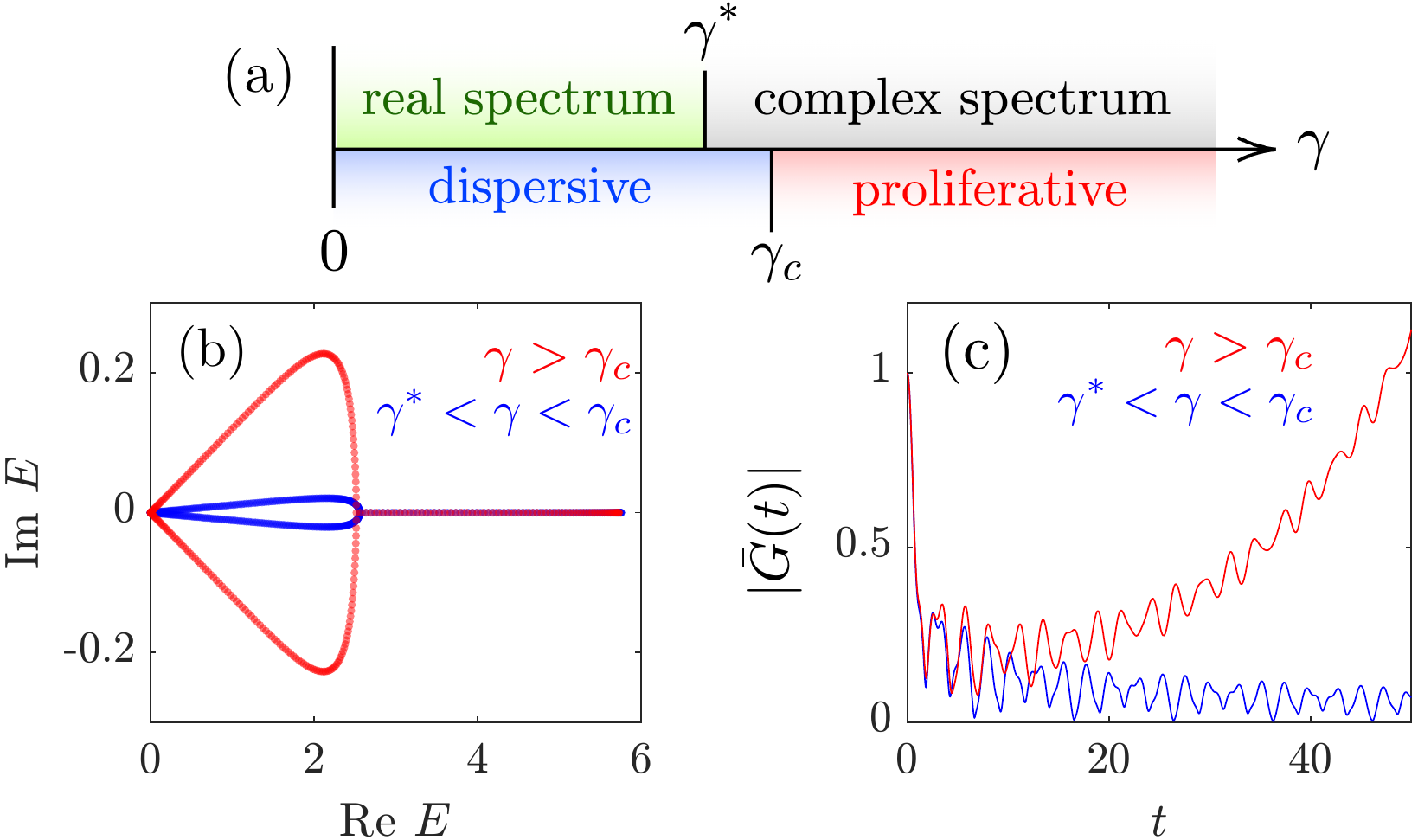}\caption{OBC spectrum and dispersiveness. (a) General relation between real OBC spectrum and dispersiveness, where $\gamma$ denotes general non-Hermitian parameters and $\gamma^{*}\leq\gamma_{c}$; 
    (b) The spectrum for $\gamma^{*}<\gamma=0.01<\gamma_c$ and $\gamma=0.1>\gamma_c$ are all complex; 
    (c) $\bar{G}(t)$ for $\gamma=0.01$ is bounded but grows exponentially at large $t$ for $\gamma=0.1$.}
    \label{fig:3}
\end{figure}

Here, we demonstrate the relation between $\gamma_c$ and $\gamma^*$ by considering the following one-dimensional non-Bloch $\mathcal{PT}$-symmetric Hamiltonian:
\begin{equation}\label{eq:PTexample}
    \mathcal{H}(z)= ((1-\gamma)z + (1+\gamma)z^{-1} + \alpha(z^2+z^{-2}))^2,
\end{equation}
where $z:=e^{ik}$ and $\gamma\geq0$ is the non-Hermitian parameter and $\alpha\geq 0$. 
For $\alpha=0.2$, the OBC spectrum splits from real to complex upon turning on the non-Hermiticity ($\gamma>0$), indicating the transition point $\gamma^{*}=0$ for this Hamiltonian. 
However, we will show that the dispersive-to-proliferative transition at $\gamma_{c}$ does not coincide with $\gamma^{\ast}$; specifically, $\gamma_{c} \approx 0.053 > \gamma^{*}$ for this model. 
The mismatch between $\gamma_c$ and $\gamma^{\ast}$ will lead to a key physical consequence that extends beyond conventional expectation: even though the OBC spectrum is complex-valued at some $\gamma$ (where $\gamma^{\ast} < \gamma < \gamma_c$), the bulk dynamics characterized by Eq.~(\ref{eq:avgpropeigval}) converge in the thermodynamic limit, as illustrated in Fig.~\ref{fig:3}(c). 

\begin{figure}[b]
    \centering
    \includegraphics[width=1\linewidth]{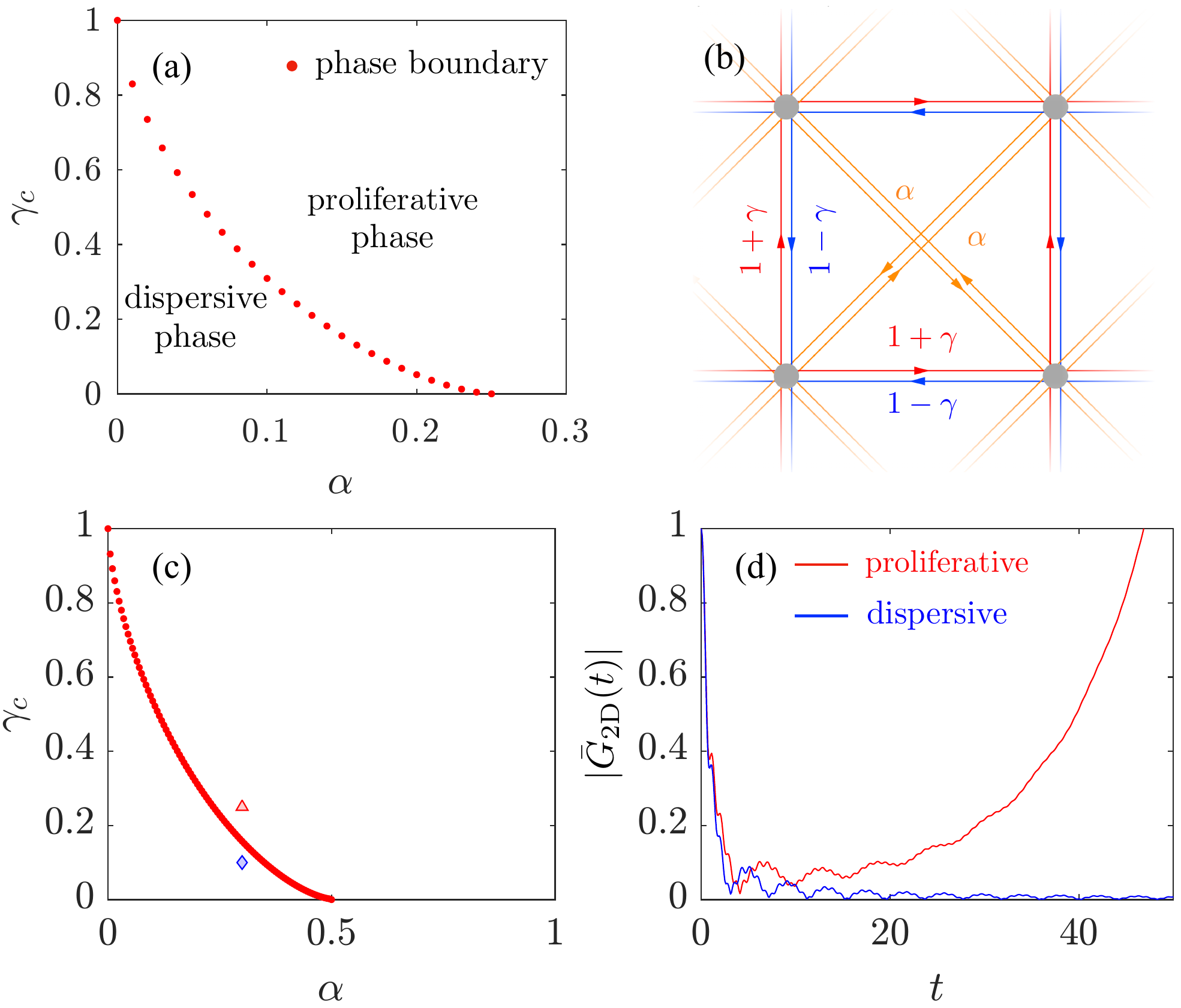}\caption{dispersive-to-proliferative transition phase diagrams in 1D and 2D. (a) dispersive-to-proliferative phase diagram for the 1D Hamiltonian $\mathcal{H}_{\text{1D}}(z)$. Red dots are the computed phase boundary using the contour integral method.
    (b) The illustration for the 2D lattice Hamiltonian given by  Eq.~\eqref{eq:PT2Dexample}. 
    (c) dispersive-to-proliferative phase diagram of the 2D Hamiltonian. The non-Bloch $\mathcal{PT}$ transition point for $\alpha<0.5$ is a finite value. 
    (d) When $\alpha=0.3$, $\bar{G}_{\text{2D}}(t)$ for  $\gamma=0.1<\gamma_c$ [signaled by a diamond in (c)] is bounded; while it grows exponentially at large $t$ for $\gamma=0.25>\gamma_c$ [indicated by a triangle in (c)].} 
    \label{fig:4}
\end{figure}

\noindent{\it {\clr Determine the dispersive phase in arbitrary dimensions.}}--- Here we provide an analytical method to determine the dispersive-to-proliferative transition point based on a deforming contour method, which is applicable in arbitrary dimensions. 
We illustrate this method using the following $\mathcal{PT}$ symmetric non-Bloch Hamiltonian:
\begin{equation}
    \mathcal{H}_{\text{1D}}(z)= (1-\gamma)z + (1+\gamma)z^{-1} + \alpha(z^2+z^{-2}),
\end{equation}
where $\gamma,\alpha \geq 0$ is assumed and $z=e^{ik}$. 
Applying Eq.~(\ref{eq:keyequality}), we have
\begin{equation}
    \bar{G}_{\text{1D}}(t)=\frac{1}{2\pi i}\oint_{|z|=1}\frac{e^{-i\mathcal{H}_{\text{1D}}(z)t}}{z} \, \dd z.
\label{eq:loop_integral}
\end{equation}
Because $\left|\bar{G}_{\text{1D}}\right|$ either decays to zero or grows exponentially as $t\to +\infty$, we only need to probe whether $\bar{G}(t)$ is bounded to determine if the system is in the dispersive phase. 
For this purpose, we first identify the region on the complex $z$ plane where $\Im\mathcal{H}_{\text{1D}}(z)\leq 0$, and then ask whether the integral contour of Eq.~\eqref{eq:loop_integral}, i.e., the unit circle $|z|=1$, can be adiabatically deformed into this region without hitting any poles. 
This criterion is a sufficient condition for the dispersive phase, because if such a contour deformation can be achieved, $|e^{-i\mathcal{H}_{\text{1D}}(z)t}|$ is bounded by 1 for all $t\geq 0$ on the contour, which implies that $\bar{G}(t)$ is bounded at $t\rightarrow+\infty$. 
Using this method, we compute the dispersive-to-proliferative transition points of this model, as shown in Fig.~\ref{fig:4}(a). 

Our method is straightforwardly applicable to higher dimensions. 
Now consider a 2D $\mathcal{PT}$ symmetric non-Bloch Hamiltonian [as illustrated in Fig.~\ref{fig:4}(b)]
\begin{equation}\label{eq:PT2Dexample}
\begin{aligned}
    \mathcal{H}_{\text{2D}}(z_{x},z_{y})=&(1+\gamma)(z_{x}+z_{y})+(1-\gamma)(z_{x}^{-1}+z_{y}^{-1})\\
    &+\alpha(z_{x}+z_{x}^{-1})(z_{y}+z_{y}^{-1}),
\end{aligned}
\end{equation}
where $z_{i}:=e^{ik_{i}}$ for $i=x,y$, and $\gamma, \alpha \geq 0$. 
We have
\begin{equation}\label{eq:barGinPT2Dexample}
    \bar{G}_{\text{2D}}(t)=\frac{1}{(2\pi i)^{2}}\oint_{|z_x|=1}\oint_{|z_y|=1}\frac{e^{-i\mathcal{H}_{\text{2D}}(z_{x},z_{y})t}}{z_{x}z_{y}} \, \dd z_{x}\dd z_{y}.
\end{equation}
If for any $\theta\in[0,2\pi]$, we can adiabatically deform the integral contour $|z_{y}|=1$ into $\Im\mathcal{H}_{\text{2D}}(r(\theta)e^{i\theta},z_y)\leq 0$, where $r(\theta)>0$ is some smooth function of $\theta$, then the same deforming contour argument shows that $|\bar{G}_{\text{2D}}(t)|$ is bounded.
Based on this method, we find the model has a non-zero dispersive-to-proliferative transition point $\gamma_{c}$ for $\alpha<0.5$.
We further plot the phase diagram in Fig.~\ref{fig:4}(c).
The phase diagram for dispersive-to-proliferative transition is sharply contrasted with the universal zero threshold ($\gamma^{*}=0$) of real-to-complex transition in the higher-dimensional OBC spectrum~\cite{Song_2022}. 
$|\bar{G}_{\text{2D}}(t)|$ decays to zero when $\gamma<\gamma_{c}$ and grows exponentially at large $t$ when $\gamma>\gamma_{c}$ [Fig.~\ref{fig:4}(d)], which validates this method. 

\noindent{\it {\clr Conclusion.}}---In this paper, we formulate a universal spectral moments theorem, applicable to any systems with finite ranged couplings. We demonstrate that the spectral moments are invariant with respect to boundary conditions. Hence they form a new class of bulk quantities and strongly constrains the OBC spectrum. We further give an analytical expression for the spectral moments based on the spectrum under PBC which are easy to compute, and an analytical expression for the average eigenvalue of the time-evolution operator in the thermodynamic limit, $\bar{G}(t)$. The boundedness of $\bar{G}(t)$ suggests a new phase of dispersiveness for non-Hermitian band systems. We further proposed a deforming contour method on the boundedness of $\bar{G}(t)$, which serves as a criterion for dispersiveness in arbitrary dimension. Our work demonstrates the feasibility of constructing bulk quantities to study the physical properties of non-Hermitian systems. The existence of a non-trivial 2D $\mathcal{PT}$ symmetric Hamiltonian where small non-Hermiticity preserves dispersiveness opens up new avenues for the study of wavepacket dynamics in higher dimensions. 

\noindent{\it {\clr Acknowledgements.}}---The authors thank Y.M. Hu for valuable discussions. 
This work was supported in part by the Office of Naval Research MURI N00014-20-1-2479 (N.C., C.S., K.Z., X.M. and K.S.), the National Science Foundation through the Materials Research Science and Engineering Center at the University of Michigan, Award No. DMR-2309029 (N.C., X.M. and K.S.), the Gordon and Betty Moore Foundation Grant No.N031710 (K.S.), and the Office Navy Research Award N00014-21-1-2770 (K.S.).

\appendix
\setcounter{equation}{0}  
\renewcommand{\theequation}{A\arabic{equation}}
\begin{widetext}
\section{Appendix A. Convergence rate of spectral moments}
\begin{lemma}\label{lemma:1}
    Let $\lambda_{1},...,\lambda_{N_{r}}$ be the eigenvalues of $\hat{H}_{r}$. For all $n\in\mathbb{N}^{+}$ we have
    \begin{equation}\label{eq:convergencespeed}
        \abs{\frac{1}{N_{r}}\sum_{j=1}^{N_{r}}\lambda_{i}^{n}-\frac{1}{mV}\int_{\bs{k}\in BZ}\Tr (H(\bs{k})^{n})\,\dd\bs{k}}=\mathcal{O}(r),
    \end{equation}
    where $m$ is the number of degree of freedom per unit cell and $V$ is the volume of the Brillouin zone.
\end{lemma}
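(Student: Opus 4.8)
The plan is to make the loop-counting argument behind Theorem~\ref{thm:spectralmoment} quantitative, tracking explicitly how the boundary corrections scale with $r$. Write the $n$-th empirical moment as $\frac{1}{N_{r}}\sum_{i=1}^{N_{r}}\lambda_{i}^{n}=\frac{1}{N_{r}}\Tr\hat H_{r}^{n}=\frac{1}{N_{r}}\sum_{s\in\Omega\cap r\Gamma}(\hat H_{r}^{n})_{ss}$, and expand each diagonal entry as in Eq.~\eqref{eq:loopcounting} into a sum of weights of closed walks of length $n$ based at $s$. Since $H$ has finite range, there is a maximal hopping magnitude $J_{\max}$ and a maximal coordination number $\kappa$, both determined by $H$ alone, so that $\abs{(\hat H_{r}^{n})_{ss}}\le C_{n}:=(\kappa J_{\max})^{n}$ uniformly in $s$ and in $r$; the same bound holds for the analogous diagonal entries of any torus Hamiltonian built from the same couplings.

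Next I would split $\Omega\cap r\Gamma$ into the \emph{bulk} set $B_{r}$ of nodes whose graph distance to the set of modified sites (those where $\hat H_{r}$ differs from the infinite-lattice Hamiltonian $H$) exceeds $n/2$, and the complementary \emph{collar} $\partial_{r}$. For $s\in B_{r}$ no closed walk of length $n$ based at $s$ can reach a modified site, hence $(\hat H_{r}^{n})_{ss}=g_{a(s)}$, where $a(s)\in\{1,\dots,m\}$ is the sublattice index of $s$ and $g_{a}:=(\hat H_{R}^{n})_{s's'}$ is the boundary-independent value carried by sublattice $a$, evaluated on any torus $\hat H_{R}$ larger than the diameter of a length-$n$ walk. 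By Bloch's theorem $\frac{1}{m}\sum_{a}g_{a}=\frac{1}{mn_{R}}\Tr\hat H_{R}^{n}$, and since $\Tr H(\bs{k})^{n}$ is a trigonometric polynomial whose frequencies are bounded in terms of $n$ and the interaction range, its Riemann sum over the discrete Brillouin zone of that torus is exact, so $\bar g:=\frac{1}{m}\sum_{a}g_{a}=\frac{1}{mV}\int_{\bs{k}\in\mathrm{BZ}}\Tr(H(\bs{k})^{n})\,\dd\bs{k}$ is precisely the quantity subtracted in Eq.~\eqref{eq:convergencespeed}. The left-hand side of Eq.~\eqref{eq:convergencespeed} then equals $\sum_{a}\bigl(\tfrac{\abs{B_{r}^{(a)}}}{N_{r}}-\tfrac1m\bigr)g_{a}+\frac{1}{N_{r}}\sum_{s\in\partial_{r}}(\hat H_{r}^{n})_{ss}$, where $B_{r}^{(a)}$ denotes the sublattice-$a$ bulk nodes. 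The second sum is bounded in absolute value by $\tfrac{\abs{\partial_{r}}}{N_{r}}C_{n}$; for the first, the number of sublattice-$a$ sites in $\Omega\cap r\Gamma$ differs from $N_{r}/m$ by at most $\mathcal{O}(\abs{\partial_{r}})$ (discrepancies occur only near the boundary) and $\abs{B_{r}^{(a)}}$ differs from that number by at most $\abs{\partial_{r}}$, so the first sum is $\mathcal{O}(\abs{\partial_{r}}/N_{r})\cdot C_{n}$. Thus everything reduces to the single estimate $\abs{\partial_{r}}/N_{r}=\mathcal{O}(r)$.

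Finally I would establish this geometric estimate: a node of $\Omega\cap r\Gamma$ within $n/2$ graph steps of a modified site lies within Euclidean distance $\le\tfrac n2\,r\,\ell_{\max}$ of $\partial\Omega$, where $\ell_{\max}$ is the length of the longest hopping vector of $\Gamma$, so $\partial_{r}$ is contained in a tubular collar of $\partial\Omega$ of width $\mathcal{O}(r)$. Assuming $\partial\Omega$ is regular enough (e.g.\ Lipschitz, or more generally of finite Minkowski content), this collar has Lebesgue measure $\le c\,r$ with $c$ depending only on $\Omega$ and $n$, while each site of $r\Gamma$ occupies volume $r^{d}v_{\mathrm{cell}}$ and $N_{r}=\abs{\Omega}/(r^{d}v_{\mathrm{cell}})+\mathcal{O}(r^{-(d-1)})$; hence $\abs{\partial_{r}}=\mathcal{O}(r^{-(d-1)})$ and $\abs{\partial_{r}}/N_{r}=\mathcal{O}(r)$, which combined with the bounds above yields Eq.~\eqref{eq:convergencespeed}. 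I expect this geometric counting step to be the main obstacle: one must impose enough regularity on $\partial\Omega$ to guarantee that the number of lattice sites in an $\mathcal{O}(r)$-collar is $\mathcal{O}(r\,N_{r})$ with a constant independent of $r$, and likewise that the per-sublattice populations equilibrate at rate $r$; the algebraic ingredients (the uniform bound $C_{n}$ on loop sums and the exactness of the Bloch decomposition on a sufficiently large torus) are elementary. A secondary point worth flagging is that $C_{n}$ and the implied constant grow with $n$ (like $(\kappa J_{\max})^{n}$ times a factor involving $n\,\ell_{\max}$ and the perimeter of $\Omega$), so the bound is uniform in $r$ at each fixed $n$ but not uniform in $n$.
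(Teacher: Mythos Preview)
Your proposal is correct and follows essentially the same route as the paper: expand $\Tr H_{r}^{n}$ as a sum over closed walks, separate bulk sites from an $\mathcal{O}(r)$-thick boundary collar, identify the bulk contribution with $\frac{1}{mV}\int_{\mathrm{BZ}}\Tr H(\bs{k})^{n}\,\dd\bs{k}$ via comparison with a large torus, and bound the remainder by the surface-to-volume ratio $\abs{\partial_{r}}/N_{r}=\mathcal{O}(r)$. Your version is in fact a bit more explicit than the paper's---you track per-sublattice populations $\abs{B_{r}^{(a)}}$ rather than whole unit cells, spell out the uniform bound $C_{n}=(\kappa J_{\max})^{n}$, note the exactness of the Riemann sum for trigonometric polynomials on a sufficiently large torus, and flag the regularity hypothesis on $\partial\Omega$ needed for the collar estimate---but these are refinements of, not departures from, the paper's argument.
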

\begin{proof}
    We have
    \begin{equation}\label{eq:traceloopformula}
        \frac{1}{N_{r}}\sum_{j=1}^{N_{r}}\lambda_{i}^{n}=\frac{\Tr H_{r}^{n}}{N_{r}}=\frac{\sum_{L}\omega(L)}{N_{r}},
    \end{equation}
    where the last sum is over all loops of length $n$ in $r\Gamma\cap\Omega$ and $\omega(L)$ is the weight of the corresponding loop.
    Since
    \begin{equation}
        \sum_{L}\omega(L)=\sum_{s}\sum_{L_{s}}\omega(L_{s}),
    \end{equation}
    where the first sum is over all nodes in $r\Gamma\cap\Omega$ and the second sum is over all loops of length $n$ with starting point $s$.
    Then, the number of points in $r\Gamma\cap\Omega$ that has at least a loop of length $n$ touching the boundary (denote this set of points as $\partial\Omega(n)$) is proportional to $nB_{r}$, where $B_{r}\propto r^{1-d}$ is the number of nodes on the boundary $\partial\Omega$.
    So we have
    \begin{equation}\label{eq:splitsum}
        \frac{\sum_{s}\sum_{L_{s}}\omega(L_{s})}{N_{r}}=\frac{\sum_{s\in\partial\Omega(n)}\sum_{L_{s}}\omega(L_{s})}{N_{r}}+\frac{\sum_{s\notin\partial\Omega(n)}\sum_{L_{s}}\omega(L_{s})}{N_{r}}.
    \end{equation}
    Since $N_{r}\propto r^{-d}$, we have 
    \begin{equation}\label{eq:estimate1}
        \frac{\sum_{s\in\partial\Omega(n)}\sum_{L_{s}}\omega(L_{s})}{N_{r}}=\mathcal{O}(r).
    \end{equation}
    To estimate the second term in the right hand side of Eq.~\eqref{eq:splitsum}, let $\hat{H}_{R}$ be the Hamiltonian one $r\Gamma$ with periodic condition such that the number of nodes along on direction, $R$, is much larger than $n$. 
    Let $n_{R}$ be the number of unit cells contained in the graph defining $\hat{H}_{R}$.
    Then for any unit cell $U\in\Omega\setminus\partial\Omega(n)$, we have
    \begin{equation}\label{eq:pbctrace}
        \sum_{s\in U}\sum_{L_{s}}\omega(L_{s})=\frac{\Tr H_{R}^{n}}{n_{R}}.
    \end{equation}
    Let $u_{r}$ be the number of unit cells in the network $r\Gamma\cap\Omega\setminus\Omega(n)$, then we have
    \begin{equation}\label{eq:estimate2}
        \frac{N_{r}-mu_{r}}{N_{r}}=\mathcal{O}(r).
    \end{equation}
    Hence we see
    \begin{equation}\label{eq:estimate3}
        \frac{\sum_{s\notin\partial\Omega(n)}\sum_{L_{s}}\omega(L_{s})}{N_{r}}=\frac{u_{r}\sum_{s\in U}\sum_{L_{s}}\omega(L_{s})}{N_{r}}=\frac{mu_{r}-N_{r}}{N_{r}}\frac{\sum_{s\in U}\sum_{L_{s}}\omega(L_{s})}{m}+\frac{\sum_{s\in U}\sum_{L_{s}}\omega(L_{s})}{m}.
    \end{equation}
    From Eq.\eqref{eq:pbctrace} we have 
    \begin{equation}\label{eq:estimate4}
        \frac{\sum_{s\in U}\sum_{L_{s}}\omega(L_{s})}{m}=\frac{\Tr H_{R}^{n}}{mn_{R}}=\lim_{R\rightarrow\infty}\frac{\Tr H_{R}^{n}}{mn_{R}}=\frac{1}{mV}\int_{\bs{k}\in BZ}\Tr (H(\bs{k})^{n})\dd\bs{k}.
    \end{equation}
    Combining Eq.~\eqref{eq:estimate1}, Eq.~\eqref{eq:estimate2},Eq.~\eqref{eq:estimate3} and Eq.~\eqref{eq:estimate4} we obtain Eq.\eqref{eq:convergencespeed}.
\end{proof}

\section{Appendix B. Proof of Eq.~\eqref{eq:szegothm} in the case of $F(z)=e^{-izt}$}
\begin{proof}
    It is easy to see that $\exists \,c>0, \forall\, r>0, n>0, \bs{k}\in BZ$,  we have $|\lambda_{s}(\bs{k})|<c$ and $|\Tr H_{r}^{n}/N_{r}|<c^n$. Given $t\in\mathbb{R}$ and $\epsilon>0$, $\exists\, N>0$ such that $\forall\, q>N, \bs{k}\in BZ, r>0$, we have
    \begin{equation}
        \abs{e^{-i\lambda_{s}(\bs{k})t}-\sum_{j=1}^{q}\frac{(-i\lambda_{s}(\bs{k})t)^{j}}{j!}}<\epsilon,\; \abs{\frac{\Tr\; e^{-iH_{r}t}}{N_{r}}-\sum_{j=1}^{q}\frac{(-it)^{j}\Tr H_{r}^{j}}{j!N_{r}}}<\epsilon
    \end{equation}
    Fix a $q>N$, we have
    \begin{equation}
        \abs{\frac{\Tr \; e^{-iH_{r}t}}{N_{r}}-\frac{1}{mV}\sum_{s=1}^{m}\int_{\bs{k}\in BZ}e^{-i\lambda_{s}(\bs{k})t}\dd\vec{k}}\leq A(r)+B(r)+C(r),
    \end{equation}
    where $A(r), B(r), C(r)$ are defined as
    \begin{equation}
    \begin{aligned}
        &A(r)=\abs{\frac{\Tr \;e^{-iH_{r}t}} {N_{r}}-\sum_{j=1}^{q}\frac{(-it)^{j}\Tr H_{r}^{j}}{j!N_{r}}},\\
        &B(r)=\frac{1}{mV}\abs{\sum_{s=1}^{m}\int_{\bs{k}\in BZ}\left(e^{-i\lambda_{s}(\bs{k})t}-\sum_{j=1}^{q}\frac{(-i\lambda_{s}(\bs{k})t)^{j}}{j!}\right)\dd\bs{k}},\\
        &C(r)=\abs{\sum_{j=1}^{q}\frac{(-it)^{j}\Tr H_{r}^{j}}{j!N_{r}}-\sum_{j=1}^{q}\sum_{s=1}^{m}\frac{1}{mV}\int_{\bs{k}\in BZ}\frac{(-i\lambda_{s}(\bs{k})t)^{j}}{j!}\dd\bs{k}}.
    \end{aligned}
    \end{equation}
    Note that $0\leq A(r)<\epsilon$, $0\leq B(r)<\epsilon$ and $C(r)\rightarrow 0$ as $r\rightarrow 0$ (by lemma~\ref{lemma:1}). Taking a upper limit of $r\rightarrow 0$, we have
    \begin{equation}
        \limsup_{r\rightarrow 0}\abs{\frac{\Tr\;e^{-iH_{r}t}}{N_{r}}-\frac{1}{mV}\sum_{s=1}^{m}\int_{\bs{k}\in BZ}e^{-i\lambda_{s}(\bs{k})t}\dd\bs{k}}<2\epsilon
    \end{equation}
    Since $\epsilon$ is arbitrary, we arrive at Eq.~\eqref{eq:keyequality}.
\end{proof}

\section{Appendix C. Proof of Corollary 1}
\begin{proof} 
If $\bar{G}(t_{0})>1$ for some $t_{0}>0$, from Eq.~\eqref{eq:keyequality}, we know $\exists \,r_{0}>0$ and $\epsilon>0$, $\forall \,0<r<r_{0}, |\Tr\;e^{-iH_{r}t_{0}}/N_{r}|>e^{\epsilon t_{0}}$. 
If the absolute values of imaginary parts of all the eigenvalues of $H_{r}$ are all less than $\epsilon$, then $|e^{-i\lambda t_{0}}|<e^{\epsilon t_{0}}$ holds for all eigenvalue $\lambda$ of $H_{r}$. 
Using the triangle inequality for absolute values and the fact that the trace of a matrix equals the sum of all its eigenvalues, we have $|\Tr\;e^{-iH_{r}t_{0}}/N_{r}|<e^{\epsilon t_{0}}$, which is a contradiction.
The proof for the case $|\bar{G}(t_{0})|>1$ for some $t_{0}<0$ is similar. 
\end{proof} 
\end{widetext}

\bibliography{refs_main}
\bibliographystyle{apsrev4-2}
\end{document}